\begin{document}
\title{MEV Capture Through Time-Advantaged Arbitrage}

\author{}
\institute{}

%
%
\author{Robin Fritsch\inst{1, 2} \and Maria Inês Silva\inst{2, 5} \and Akaki Mamageishvili\inst{3} \and Benjamin Livshits\inst{4} \and Edward W. Felten\inst{3}}
%
\authorrunning{Fritsch et al.}
%
\institute{ETH Zurich \and Matter Labs \and Offchain Labs \and Imperial College London, \and NOVA Information Management School}
%
\maketitle              
\begin{abstract}
As blockchains begin to process significant economic activity, the ability to include and order transactions inevitably becomes highly valuable, a concept known as Maximal Extractable Value~(MEV). This makes effective mechanisms for transaction inclusion and ordering, and thereby the extraction of MEV, a key aspect of blockchain design. Beyond traditional approaches such as ordering in a first-come-first-serve manner or using priority fees, a recent proposal suggests auctioning off a time advantage for transaction inclusion. In this paper, we investigate this time advantage mechanism, focusing specifically on arbitrage opportunities on Automated Market Makers~(AMMs), one of the largest sources of MEV today. We analyze the optimal strategy for a time-advantaged arbitrageur and compare the profits generated by various MEV extraction methods. Finally, we explore how AMMs can be adapted in the time advantage setting to capture a portion of the MEV.

\keywords{MEV \and Arbitrage \and AMMs \and Latency \and Rollups.}
\end{abstract}

\section{Introduction}

\gls{DeFi} has played a central role in the Ethereum ecosystem ever since its inception, encompassing many of its most-used applications. DeFi encompasses a range of financial services, including token trading, as well as borrowing and lending, which are facilitated through smart contracts deployed on the blockchain.
More recently, \gls{DeFi} has begun expanding to \gls{L2} scaling solutions. Trading volumes on \glspl{L2} are rising consistently, with some rollups already processing more token swaps than the Ethereum itself~\cite{2024DeFiCategories,2024L2BeatTVL}.

Two main features of rollups drive this trend in \gls{DeFi}. First, rollups offload most of the computation required to verify and execute transactions outside the Ethereum network, only storing condensed batches of transactions on Ethereum. This allows them to run more efficiently and reduce transaction costs for end users. Second, rollups store the condensed batches of transactions and proofs on Ethereum, thus inheriting (some of) Ethereum's security guarantees.

As \gls{DeFi} activity on rollups grows, so do the incentives for various opportunistic economic practices collectively known as \gls{MEV}. \gls{MEV} -- a phenomenon first described for the Ethereum blockchain~\cite{daian_flashboys_2020} -- describes the value that can be extracted from the ability to include and reorder transactions on a blockchain. Currently, the most popular forms of \gls{MEV} are arbitrages (that try to capitalize on price discrepancies between different on-chain trading venues), liquidations (where collateral from debt repayment in lending protocols can be purchased at a discount), and sandwiching (which involves wrapping a victim's swap transaction between two new transactions in such a way that provides a worse trade execution to the victim while making a small profit)~\cite{qin_quantifying_mev_2022}.
On the most prominent rollups today, arbitrages and liquidity occur in similar volumes to Ethereum, while sandwich attacks have not been observed \cite{torres_rolling_2024}.

\subsection{MEV Capture Mechanisms}

With both \gls{DeFi} and arbitrage activity increasing, policies to include and order transactions have become a key design choice for rollups. Initially, most rollups employed a simple design to order transactions solely based on when the centralized sequencer received them. This ordering system is known as \gls{FCFS}, and both optimistic (such as Arbitrum) and ZK (such as ZKsync) rollups still use it. In this system, transaction fees are determined by the state transition function, and users cannot pay a premium for faster inclusion. Instead, users who value fast transaction inclusion are motivated to invest in latency infrastructure in order to improve their latency to the sequencer.

Another popular mechanism is a \glspl{PGA}.
In this scheme, transactions are grouped into blocks which are produced regularly.
Users pay a priority fee to the block builder, according to which transactions in a block are ordered.
With this design, users benefit less from a latency advantage and instead express the value of their transaction by paying higher priority fees. Rollups such as Optimism and Base employ this design. This is also the design used by Ethereum, even though most MEV extraction is currently performed through the MEV-Boost implementation.

As expected, the transaction ordering policy significantly influences how the profit from \gls{MEV} extraction is distributed among participants. For instance, on Ethereum, most profits are paid to validators through MEV-Boost, while on Arbitrum and for \gls{FCFS}, more generally, profits stay with \gls{MEV} extractors and are likely invested in improving their access to sequencers (e.g., through latency improvements). Avoiding such latency competitions would arguably lead to a more efficient market design. This can be achieved by letting extractors bid for the opportunity to extract \gls{MEV} in an auction.
A recent proposal by Arbitrum~\cite{arb_timeboost_2024} involves introducing an auction market to sell a time advantage, with the profits going to the rollup protocol. Concretely, users bid for the right to access an ``express lane'' where transactions are sequenced immediately. This access is sold to a single user for a fixed time interval. Transactions submitted by the remaining users are artificially delayed by the sequencer for a predefined time, resulting in the user who purchases the time advantage being able to guarantee their transactions being sequenced ahead of its competitors.

The key questions are how rational \gls{MEV} extractors will behave under this design and how much value rollups are expected to capture.
In particular, the new mechanism adds a new layer of complexity: the timing of arbitrage. Previously, competition forced arbitrageurs to act as soon as opportunities were detected. However, a time-advantaged arbitrageur possesses an exclusive time window to react and now faces a choice: take an early profit, fearing the opportunity may close, or wait in hopes of securing a larger future profit.

Moreover, the question arises whether such a design introduces new opportunities for \gls{MEV} extraction and, thus, higher profits for extractors at the expense of other participants. In this work, we aim to tackle these questions.

\subsection{Contributions}

This work studies and compares emergent sequencing policies for rollups. To the best of our knowledge, this is the first that not only measures the impact of sequencing rules on arbitrage profits but also analyses how introducing a time advantage to the sequencing policy impacts arbitrageurs' decisions. Concretely, this work provides five main contributions:
\begin{itemize}
    \item A theoretical model to analyze the scenario where a single actor has a time advantage and uses it to perform arbitrage between an \gls{AMM} liquidity pool and an (infinitely liquid) external venue. This model assumes a sequencing design similar to the current Timeboost proposal from Arbitrum~\cite{arb_timeboost_2024}.
    \item A theoretical derivation of the optimal strategy for when the advantaged arbitrageur submits the extraction transaction under the proposed model. 
    \item Using dynamic programming and the empirical price distributions of some key liquidity pools, we analyze the optimal strategy for when the advantaged arbitrageur submits the extracting trade. We find that waiting until the end of the advantage interval is the best strategy, assuming the future price distribution is independent of the past price movement.
    \item We simulate the expected profits an arbitrageur could extract in some key liquidity pools under three separate sequencing regimes -- \gls{FCFS}, \gls{PGA}, and Timeboost. We find that the theoretical finding in the previous step, that it is always optimal to wait with the trade as long as possible, is sometimes invalidated. A potential explanation of this finding is that future price distribution is not independent of the past price movement, which can actually be seen as an outside market imperfection.
    \item We look into an option of letting pools capture some share of the value from the arbitrage opportunity by letting the contract know about Timeboost transaction at the time of execution. This results in a sequential game where the pool first sets up a fee structure for extracting the value, and then the arbitrageur best responds to it by choosing the price that maximizes its returns. In the equilibrium of this game, the pool obtains 25\% and the time-advantaged arbitrageur obtains 50\% of the total value.
\end{itemize}

\subsection{Related Work}

Arbitrage in the Ethereum ecosystem has received a growing interest from researchers. On one hand, it is one of the most prevalent \gls{MEV} strategies. On the other hand, it has a significant impact on how \glspl{DEX} operate and, consequently, on the experience of both users and liquidity providers.

A key area of research is quantifying the profit potential of arbitrage and the amount actually extracted by arbitrageurs. \citet{heimbach2024nonatomic} study non-atomic arbitrage, a form of \gls{MEV} that exploits price differences between \glspl{DEX} and off-chain exchanges. The authors conservatively estimate that non-atomic arbitrage has historically accounted for about 30\% of trading volume on Ethereum's largest \glspl{DEX}. Our work also focuses on this type of non-atomic arbitrage but with two key distinctions. First, we assess the full potential arbitrage that could optimally be extractable -- as opposed to what has been extracted historically. Second, we do so in a setting with a latency auction, which introduces the problem of deciding the optimal timing for arbitrating. In contrast to \gls{FCFS}, where competition leads to arbitrageurs exploiting arbitrage opportunities as soon as they occur, or \glspl{PGA}, where arbitrageurs act as soon as the next block is produced, a time-advantaged arbitrageur can delay exploitation without the risk of another arbitrageur seizing the opportunity. In general, MEV and AMM arbitrage have to date solely been studied in either FCFS or PGA settings in the existing literature, including the works cited below.

\citet{milionis2022automated} study profits from arbitrage between \glspl{AMM} and an external exchange from a different angle: they consider them as losses to \glspl{LP} and coin the term \gls{LVR}. In follow-up work, an analytical expression for the arbitrage profits as a function of trading fees and block times is derived \cite{milionis2023automated}.
Following this work, \citet{fritsch2024arblosses} study historical arbitrage profits using the same \gls{LVR} formalization. In addition to measuring historical arbitrage losses across different Uniswap pools on Ethereum, the authors also analyze the relationship between block times and arbitrage profits.

Within the general scope of measuring \gls{MEV} profits, multiple works \cite{qin_quantifying_mev_2022, torres_rolling_2024} have considered rollups specifically (as opposed to previous works that focus on Ethereum). Both papers consider arbitrages as the most relevant form of \gls{MEV} but also consider liquidations and front-running/sandwiching. These analyses look at historical transaction data from the top rollups and define heuristics to detect common \gls{MEV} attacks. However, since this type of arbitrage is challenging to detect, neither work reports on non-atomic arbitrage between \glspl{DEX} and off-chain exchanges. 
Finally, \citet{gogol_layer2_arbitrage_2024} also looks into non-atomic arbitrages in rollups. However, they focus on quantifying the value of historically unexploited arbitrage opportunities between top Uniswap pools and Binance. In addition, they measure the average times for these opportunities to disappear.

\section{Model Description}\label{sec:model}

We study the scenario in which one single actor has a time advantage of $T_w$ over everybody else when submitting transactions to a rollup (or blockchain) sequencer over a time period of length $T$.
This time advantage is achieved by delaying the processing of all other actors' transactions by $T_W$, while the advantaged actor’s transactions are sequenced immediately. In other words, one actor purchases access to a ``fast lane'', allowing their transactions to bypass an artificially applied delay of $T_w$ during the regular processing of transactions.
Apart from the applied delay, transactions are ordered on a first-come, first-served basis. Note that there is no limit on the number of transactions that can be submitted through the fast lane during the period $T$ of access.

To assess the value of this advantage, we examine the potential arbitrage profits that can be extracted with this time edge.
Specifically, we examine the profit potential from performing arbitrage trades between an \gls{AMM} pool on the rollup and an external venue on which the same token pair is traded. The external venue could be thought of as comprising the entire outside market.
We assume the price impact of trading on the external venue is negligible compared to the price impact on the \gls{AMM}.~\footnote{This is the case when the external venue is assumed to have infinitely deep liquidity, or at least significantly deeper liquidity than the \gls{AMM} pool.}

\subsection{Arbitrage Opportunities on AMMs}

We consider an \gls{AMM} pool with a trading fee $f$, facilitating trades between two assets. These pools collect token deposits from \glspl{LP} as reserves, against which traders can swap for a fee. Exchange rates are algorithmically determined -- e.g., by maintaining a constant product of the reserves, in the simplest case~\cite{angeris21uniswap}.

We consider a pool containing a risky asset and a numéraire, and assume the risky asset's price fluctuates on the outside market over time. The AMM price adjusts solely through arbitrage trades with the outside market, reflecting our assumption that price discovery occurs externally. We ignore uninformed ("noise") trades on the AMM.

Let $p$ represent the price of the risky asset on the external market relative to its price on the \gls{AMM}. Specifically, for an AMM price $P_0$ and an external market price $P$, we define $p = P/P_0$.
For example, $p = 0.9$ indicates that the external price is 10\% lower than the \gls{AMM} price. Initially, we assume the prices on the \gls{AMM} and the external market are equal, i.e., $p = 1.0$.

Arbitrageurs monitor price discrepancies between the external market and the \gls{AMM}. When an arbitrage opportunity arises, they compete to exploit it. Realizing arbitrage profits involves executing one trade each on the external market and the \gls{AMM}.
Notably, only the first arbitrageur to interact with the \gls{AMM} can perform a profitable arbitrage, as their trade moves the AMM pool price towards the external market price.

In the setting we analyze, the time-advantaged arbitrageur's trades are immediately processed by the \gls{AMM}, while transactions from other arbitrageurs are delayed. Consequently, a time window of length $T_w$ exists, starting when an arbitrage opportunity emerges, during which no other arbitrageur can respond to the opportunity due to the imposed delay.
This allows the time-advantaged arbitrageur to decide whether and when to capitalize on the opportunity during this time window, contrasting with a pure first-come, first-serve setting where competition forces immediate extraction.

Additionally, we assume that any arbitrage opportunities not exploited by the time-advantaged arbitrageur within a time of $T_w$ of their emergence will be seized by other arbitrageurs.
Note that this is a conservative assumption with respect to the profits of the time-advantaged arbitrageur.\footnote{It is not guaranteed that delayed arbitrageurs will always attempt to exploit arbitrage opportunities due to the risk of partly execution and changes of the external price during the delay period.}
Importantly, any strategy employed by the time-advantaged arbitrageur under these assumptions will remain effective regardless of the actions of the delayed arbitrageurs.

\subsection{The Time-Advantaged Arbitrageur}

Our objective is to determine the optimal strategy for the time-advantaged arbitrageur and the resulting profits.
A strategy defines when the arbitrageur chooses to execute an arbitrage, based on three factors: the current relative price difference $p$, the elapsed time $t_w$ within their advantage window of length $T_w$, and the elapsed time $t$ within the overall period $T$.
Formally, a strategy maps every possible tuple $(p, t_w, t)$ to an action in $\{\text{arbitrage}, \text{wait}\}$.

We model the overall time horizon $T$ as a series of discrete time steps of size $\Delta$.
At each time step -- i.e., at time $t=0,\Delta,2\Delta,\ldots$ -- the arbitrageurs, particularly the one with a time advantage, have the opportunity to submit trades to both the \gls{AMM} and the external market.

Given the relative price difference $p$ between the \gls{AMM} and the external market, we consider the \emph{maximal arbitrage trade}, i.e., the arbitrage trade that yields the maximal profit, and denote the profits from executing it as $profit(p)$. This profit is measured in the numéraire and is considered relative to the \gls{AMM} pool's value, which is also measured in the numéraire at the \gls{AMM} price before the AMM trade.
Note that if the price difference is too small, no profitable arbitrage trade will be possible due to the trading fees on the \gls{AMM} and, in this case, $profit(p)=0$.
The profit function for a constant product AMM pool is derived in Section \ref{ssec:cpmm} and plotted in Figures \ref{fig:profit_function}.

Due to the \gls{AMM}'s trading fee, the maximal arbitrage does not completely eliminate the price difference. Let $pArb(p)$ represent the remaining relative price difference after the maximal arbitrage trade is executed at the relative price difference $p$.
Finally, let $poolVal(p)$ denote the relative change in the pool's value resulting from the maximal arbitrage trade executed at the relative price difference $p$.

Let $opt(p, t_w, t)$ represent the expected profit a time-advantaged arbitrageur can achieve between time $t$ and $T$ by employing the optimal strategy, given a relative price difference of $p$ at time $t$ and an elapsed time $t_w$ within the advantage window. That is,
 \begin{align*}
    opt\colon \mathbb{R}_{>0}\times [0,T_w]\times [0,T] &\to \mathbb{R}_{\geq 0}.
\end{align*}
Note that this definition of $opt()$ assumes that the external price of the risky asset follows a memoryless process, which means that future price changes are independent of past prices. Consequently, $opt()$ depends solely on the current relative price difference $p$ rather than the entire history of past prices.

At the end of the time period (i.e., $t=T$), seizing any remaining arbitrage opportunity is clearly optimal, as the arbitrageur’s time advantage ceases beyond this point. Therefore, we have
\begin{align*}
    opt(p,t_w, T) = profit(p)\qquad \forall\, p\in \mathbb{R}, t_w\in [0,T_w].
\end{align*}

Furthermore, we assume that it is always optimal for the arbitrageur to exploit any available arbitrage opportunity at the end of the delay time $T_w$ (since the opportunity will otherwise be taken by other arbitrageurs). When doing so, the arbitrageur instantly makes a profit of $profit(p)$ and resets their advantage window $t_w=0$ (since the other arbitrageurs need to react to the change of the AMM price).
Additionally, the arbitrageur expects further future profits between time $t$ and $T$.
Since, the arbitrage trade has moved the relative price difference to $pArb(p)$, the expected future profits are given by $opt(pArb(p),0,t)$.
However, the future profits are measured relative to the pool value after the arbitrage. So, this term needs to be multiplied by the relative change in the pool value $poolVal(p)$.
Combining immediate profits and scaled future profits, the total expected profit is:
\begin{align*}
    opt(p,T_w, t) = profit(p) + poolVal(p)\cdot opt(pArb(p),0,t) \qquad \forall\, p\in \mathbb{R}, t\in [0,T].
\end{align*}

For any other point in time, that is, when $t_w\neq T_w$ and $t\neq T$, the arbitrageur must decide between two actions:  
\begin{enumerate}
    \item Execute an arbitrage trade immediately at time $t$.
    \item Wait until the next time step $t + \Delta$. 
\end{enumerate}
The arbitrageur will chose the action which yields the higher expected future profits.

If the arbitrageur chooses to wait, the expected future profit depends on the expected relative price difference at the next time step.
Let $p_\Delta$ be the random variable describing the relative price change over the time interval $\Delta$.
The expected profit of waiting is then calculated as the profit from the optimal strategy at the next step, given the updated price $p \cdot p_\Delta$ and the updated times $t_w + \Delta$ and $t + \Delta$:
\begin{align*}
    wait(p, t_w, t) := \mathbb{E}\left[opt(p\cdot p_\Delta, t_w+\Delta, t+\Delta) \right].
\end{align*}
Here, the expectation is taken over the random variable $p_\Delta$.

On the other hand, the expected profit obtained by arbitraging is given by the same formula as in the previously considered case of arbitraging at the end of the time window. Concretely, it is given by
\begin{align*}
    arb(p, t_w, t) := profit(p) + poolVal(p)\cdot opt(pArb(p),0,t).
\end{align*}
Therefore, the advantaged arbitrageur's optimal expected profit is given by the maximum of the expected profit of their two possible decisions:
\begin{align*}
    opt(p,t_w,t) &= \max\big(wait(p,t_w,t), arb(p,t_w,t)\big)\\    
    &= \max\Big( \mathbb{E}\left[opt(p\cdot p_\Delta, t_w+\Delta, t+\Delta) \right],\\
    &\qquad\qquad\qquad\qquad profit(p) + poolVal(p)\cdot opt(pArb(p),0,t) \Big).
\end{align*}
Finally, to obtain a recursive definition, we express $opt(pArb(p),0,t)$ in terms of results from times steps later than $t$. This is needed for the empirical analysis later on. Since, by definition of $pArb(p)$, no profitable arbitrage trade is possible in this situation, the arbitrageur's only option is to wait for a time step $\Delta$. Therefore,
\begin{align*}
    opt(pArb(p),0,t) = wait(pArb(p),0,t) =\mathbb{E}\left[opt(pArb(p)\cdot p_\Delta, 0+\Delta, t+\Delta) \right].
\end{align*}
The four expressions above fully define $opt$ for all input values, given the $profit()$ function and the distribution of price changes $p_\Delta$.

\section{Optimal Strategy}

\subsection{Arbitrage Profits for Constant Product Pools}
\label{ssec:cpmm}

If the pool being arbitraged is a \gls{CPMM}, the profit function ($profit(p)$), and the remaining relative price difference and pool size after the arbitrage trade ($pArb(p)$ and $poolVal(p)$, respectively) in terms of the relative price difference $p$ and the trading fee $f$ is:
\hspace{-1em}
\begin{align}\label{eq:profit_arb}
    profit(p) = \begin{cases}
    \frac{1}{2}p - \frac{1}{\sqrt{1-f}}\sqrt{p} +\frac{1}{2}\frac{1}{1-f}  & \text{if } p > \frac{1}{1-f} \\
    0 & \text{if } 1-f \leq p \leq \frac{1}{1-f} \\
    \frac{1}{2}\frac{1}{1-f}p - \frac{1}{\sqrt{1-f}}\sqrt{p} +\frac{1}{2} & \text{if } p < 1-f
\end{cases}
\end{align}
\hspace{-2em}
\begin{align}\label{eq:price_arb}
    pArb(p) = \begin{cases}
    \frac{1}{1-f}  & \text{if } p > \frac{1}{1-f} \\
    p & \text{if } 1-f \leq p \leq \frac{1}{1-f} \\
    1-f & \text{if } p < 1-f
\end{cases}
\end{align}
\hspace{-2em}
\begin{align}
    poolVal(p) = \sqrt{\frac{p}{pArb(p)}} = \begin{cases}
    \sqrt{1-f} \sqrt{p}  & \text{if } p > \frac{1}{1-f} \\
    1 & \text{if } 1-f \leq p \leq \frac{1}{1-f} \\
    \frac{1}{\sqrt{1-f}}\sqrt{p} & \text{if } p < 1-f.
\end{cases}
\end{align}

For conciseness, we provide the full derivation of these formulas in Section \ref{annex:cpmm-profit} of the appendix.
Figure \ref{fig:profit_function} plots the profit function for a trading fee of $f=0.05\%$.

\begin{figure}[ht]
    \centering
    \includegraphics[width=0.7\linewidth]{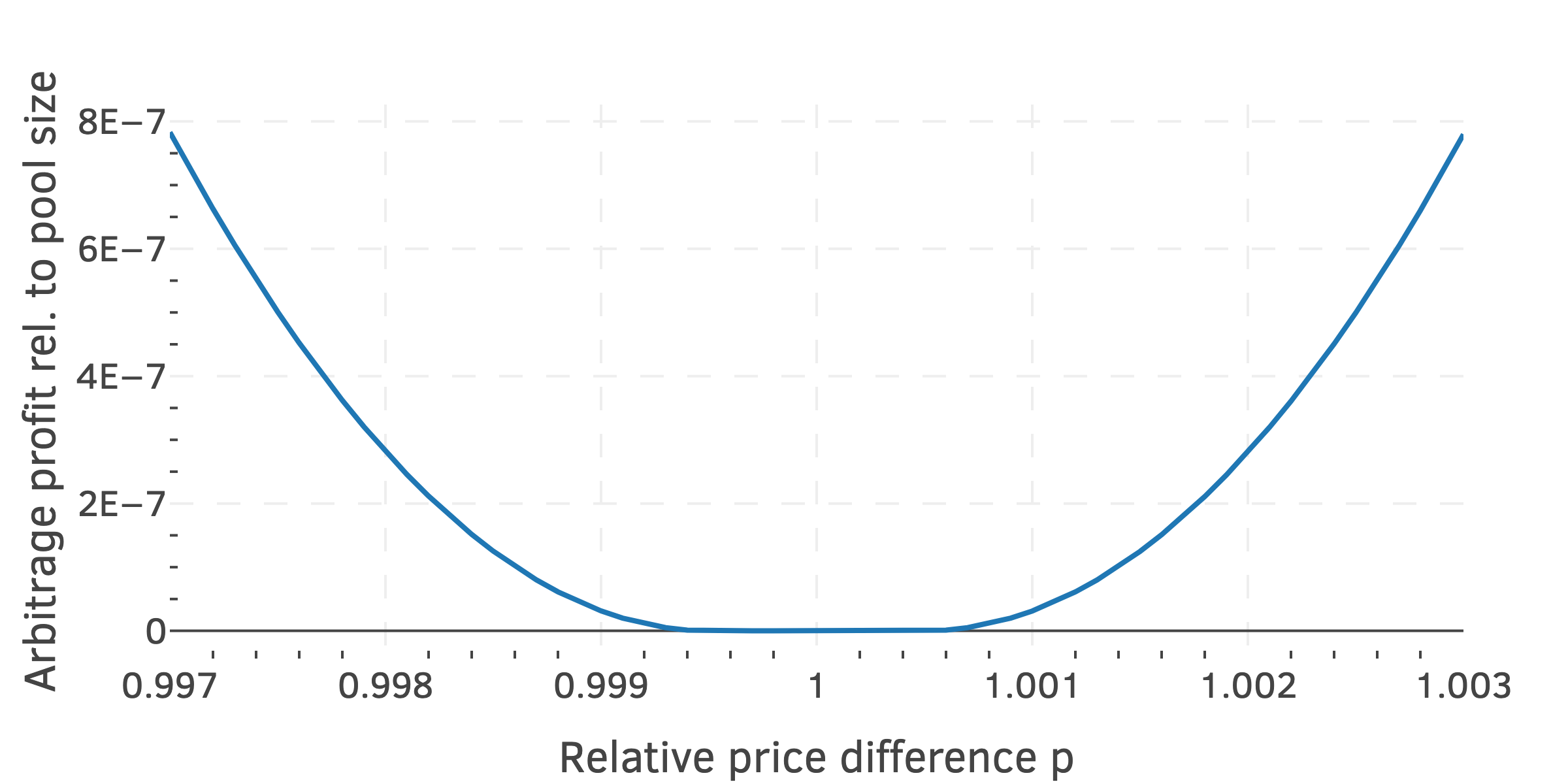}
    \caption{Maximal arbitrage profit relative to the pool value for a constant product pool with a trading fee of $f=0.05\%$.}
    \label{fig:profit_function}
\end{figure}

\subsection{Analytical Results}

The decision between arbitraging and waiting hinges on which of the two terms in the definition of $opt(p,t_w,t)$ is greater. Specifically, if $arb(p,t_w,t) > wait(p,t_w,t)$, then arbitraging is the more profitable option. Conversely, if $wait(p,t_w,t)$ exceeds $arb(p,t_w,t)$, waiting is the more advantageous action. Using the fact that $profit(pArb(p))=0$, the condition for when arbitraging is advantageous can be rewritten in the following way.

\begin{corollary}
    For the time-advantaged arbitrageur, arbitraging is advantageous over waiting if
    \begin{multline} \label{eq:arb_wait_condition}
        \mathbb{E}\left[opt(pArb(p)\cdot p_\Delta, 0+\Delta, t+\Delta) \right] - profit(pArb(p)) > \\
        \frac{1}{poolVal(p)}\left(\mathbb{E}\left[opt(p\cdot p_\Delta, t_w+\Delta, t+\Delta)\right] - profit(p)\right).
    \end{multline}
\end{corollary}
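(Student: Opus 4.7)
The plan is to obtain the inequality in the corollary by directly unfolding the definitions of $arb$, $wait$, and the one-step recursion for $opt$ at price difference $pArb(p)$, then rearranging. Since this is simply an equivalent rewriting of the condition $arb(p,t_w,t) > wait(p,t_w,t)$, no additional probabilistic or optimization machinery is needed beyond bookkeeping.

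First I would start from $arb(p,t_w,t) > wait(p,t_w,t)$, which by the definition of $opt$ as a maximum of the two branches is precisely when arbitraging dominates waiting. Substituting the definition of $arb$ gives
$$profit(p) + poolVal(p)\cdot opt(pArb(p),0,t) > wait(p,t_w,t).$$
I would then replace $opt(pArb(p),0,t)$ by $\mathbb{E}\left[opt(pArb(p)\cdot p_\Delta, \Delta, t+\Delta)\right]$, the identity established just before the corollary: at price difference $pArb(p)$ no profitable arbitrage remains, so waiting one step is the only option and $opt$ reduces to its $wait$ branch.

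Next, I would expand $wait(p,t_w,t)$ into its expectation form, subtract $profit(p)$ from both sides, and divide through by $poolVal(p)$. The division preserves the direction of the inequality because $poolVal(p)=\sqrt{p/pArb(p)}>0$ for every admissible $p$, as is immediate from the piecewise formula for $poolVal$ in Section~3.1. This produces
$$\mathbb{E}\left[opt(pArb(p)\cdot p_\Delta, \Delta, t+\Delta)\right] > \frac{1}{poolVal(p)}\Big(\mathbb{E}\left[opt(p\cdot p_\Delta, t_w+\Delta, t+\Delta)\right] - profit(p)\Big).$$
Finally, I would subtract $profit(pArb(p))$ from the left-hand side. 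By the definition of $pArb$ this quantity equals zero, so the inequality is unchanged, but the resulting expression now matches (\ref{eq:arb_wait_condition}) exactly and makes the two branches of the condition appear in symmetric form.

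I do not anticipate any real obstacle: the proof is a one-screen algebraic manipulation. The only step that warrants even a brief check is the sign of $poolVal(p)$ when dividing, which is handled by the explicit formula, and the insertion of $-profit(pArb(p))$ on the left is purely cosmetic, justified by $profit(pArb(p))=0$.
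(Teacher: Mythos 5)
Your proposal is correct and matches the paper's (implicit) argument exactly: the corollary is obtained by unfolding $arb(p,t_w,t) > wait(p,t_w,t)$ via the one-step identity $opt(pArb(p),0,t) = \mathbb{E}\left[opt(pArb(p)\cdot p_\Delta, \Delta, t+\Delta)\right]$, dividing by $poolVal(p)>0$, and inserting the vanishing term $profit(pArb(p))=0$. Your brief justifications of the two nontrivial points---the positivity of $poolVal(p)$ and why $profit(pArb(p))=0$---are both sound and consistent with the paper.
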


From this corollary, we can derive the optimal strategy for the time-advantaged arbitrageur, given certain assumptions. Concretely, we start by assuming that the pool has no trading fees ($f=0$) and that the expected future price equals the current price, i.e., $\mathbb{E}[p_\Delta]=1$.

In particular, the latter holds when prices follow a geometric Brownian motion -- the standard model for asset prices. There, the price changes $p_\Delta$ are distributed independently and log-normally, i.e.,
\begin{align*}
    p_\Delta \sim \exp\left(-\frac{\sigma^2}{2}\Delta + \sigma\mathcal{N}(0,\Delta)\right),
\end{align*}
where $\mathcal{N}(0,\Delta)$ denotes a Gaussian random variable.
To check this, use the expression for the expectation of a log-normal:
\begin{align*}
    \mathbb{E}[p_\Delta] = e^{-\frac{\sigma^2}{2}\Delta}e^{\frac{\sigma^2}{2}\Delta} = 1.
\end{align*}

\begin{proposition}\label{indifference}
    Consider a CPMM pool without trading fees ($f=0$) and a distribution of price changes such that the expected future price equals the current price, i.e., $\mathbb{E}[p_\Delta]=1$.
    Then, the time-advantaged arbitrageur is always indifferent to waiting or arbitraging.
\end{proposition}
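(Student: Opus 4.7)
The plan is backward induction on the discretised time $t \in \{0, \Delta, \ldots, T\}$. First, I would collapse the setting: when $f=0$ the piecewise formulas \eqref{eq:price_arb}--\eqref{eq:profit_arb} unify to $pArb(p)=1$, $poolVal(p)=\sqrt{p}$, and $profit(p)=\tfrac{1}{2}(\sqrt{p}-1)^2 = \tfrac{1}{2}p - \sqrt{p} + \tfrac{1}{2}$. Because $pArb(p)=1$ always, the term $opt(pArb(p),0,t)$ appearing in $arb$ reduces to $opt(1,0,t) = wait(1,0,t)$, so the entire recursion is only ever re-entered at price ratio $1$ after any arbitrage. This eliminates the need for separate case analysis on the branches of $pArb$.

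The key is to prove a structural claim strictly stronger than indifference, namely the product form
\begin{equation*}
opt(p, t_w, t) \;=\; profit(p) + \sqrt{p}\cdot C(t),
\end{equation*}
where $C(t)$ depends only on $t$ (not on $p$ or $t_w$). The base case $t = T$ is immediate with $C(T)=0$. For the inductive step, substituting the hypothesis at $t+\Delta$ gives
\begin{align*}
wait(p, t_w, t) &= \mathbb{E}[profit(p\cdot p_\Delta)] + \sqrt{p}\,\mathbb{E}[\sqrt{p_\Delta}]\, C(t+\Delta), \\
arb(p, t_w, t) &= profit(p) + \sqrt{p}\,\bigl(\mathbb{E}[profit(p_\Delta)] + \mathbb{E}[\sqrt{p_\Delta}]\, C(t+\Delta)\bigr),
\end{align*}
so equality of $wait$ and $arb$ reduces to the single algebraic identity
\begin{equation*}
\mathbb{E}[profit(p\cdot p_\Delta)] \;=\; profit(p) + \sqrt{p}\cdot \mathbb{E}[profit(p_\Delta)].
\end{equation*}
Verifying this identity is mechanical: expand $profit(x) = \tfrac{1}{2}x - \sqrt{x} + \tfrac{1}{2}$, pull $p$ out under the expectation, and cancel the linear terms using the martingale assumption $\mathbb{E}[p_\Delta]=1$. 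Once established, both expressions equal $\mathbb{E}[profit(p\cdot p_\Delta)] + \sqrt{p}\,\mathbb{E}[\sqrt{p_\Delta}]\,C(t+\Delta)$, which preserves the product form with the $p$-independent update $C(t) = \mathbb{E}[profit(p_\Delta)] + \mathbb{E}[\sqrt{p_\Delta}]\,C(t+\Delta)$.

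The only non-routine step is guessing the correct invariant: the split into a $p$-dependent piece equal to $profit(p)$ and a $\sqrt{p}$-scaled piece that absorbs the pool-value change across an arbitrage. Once that ansatz is in place, the induction is essentially forced, and the quadratic-in-$\sqrt{p}$ shape of $profit$ together with $\mathbb{E}[p_\Delta]=1$ is exactly what makes the identity hold. The boundary case $t_w=T_w$, where the arbitrageur is forced to act, needs no separate argument because the forced value coincides with $arb(p, t_w, t)$, which the induction has already shown equals $wait$.
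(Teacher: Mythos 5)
Your proof is correct and takes essentially the same route as the paper's: backward induction from $t=T$, with the martingale condition $\mathbb{E}[p_\Delta]=1$ and the quadratic-in-$\sqrt{p}$ shape of $profit$ collapsing the arb-versus-wait comparison into an identity at every step. Your product-form invariant $opt(p,t_w,t)=profit(p)+\sqrt{p}\,C(t)$ is just a cleaner packaging of the paper's explicit closed form $opt(p,t_w,T-k\Delta)=\tfrac{1}{2}p-\sqrt{p}\,\mathbb{E}[\sqrt{p_\Delta}]^{k}+\tfrac{1}{2}$, which corresponds to $C(T-k\Delta)=1-\mathbb{E}[\sqrt{p_\Delta}]^{k}$.
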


\begin{proof}
We prove the statement by inductively computing $wait(p,t_w,t)$ and $arb(p,t_w,t)$ starting at the end of the time interval $T$.

For $f=0$, the function for the maximal arbitrage profit in \eqref{eq:profit_arb} simplifies, and the optimal profit at time $t=T$ is given by
\begin{align*}
    opt(p, t_w, T) = profit(p) = \frac{1}{2}p - \sqrt{p} + \frac{1}{2}.
\end{align*}
With this, condition \eqref{eq:arb_wait_condition} for time $t=T-\Delta$ simplifies to
\begin{multline} \label{eq:arb_wait_condition_profit}
    \mathbb{E}\left[profit(pArb(p)\cdot p_\Delta)\right] - profit(pArb(p)) > \\
    \frac{1}{poolVal(p)}\left(\mathbb{E}\left[profit(p\cdot p_\Delta)\right] - profit(p)\right).
\end{multline}
With this, it follows that
\begin{align*}
    \mathbb{E}\left[profit(p\cdot p_\Delta)\right] &= \mathbb{E}\left[\frac{1}{2}p\cdot p_\Delta - \sqrt{p\cdot p_\Delta} + \frac{1}{2}\right] \\
    &= \frac{1}{2}p\cdot \mathbb{E}\left[p_\Delta\right] - \sqrt{p}\cdot \mathbb{E}\left[\sqrt{p_\Delta}\right] + \frac{1}{2} \\
    &= \frac{1}{2}p - \sqrt{p}\cdot \mathbb{E}[\sqrt{p_\Delta}] + \frac{1}{2}.
\end{align*}
Inserting the above and the expression for $profit(p)$ into condition \eqref{eq:arb_wait_condition_profit} yields
\begin{align*}
    \sqrt{pArb(p)}\left(1-\mathbb{E}[\sqrt{p_\Delta}]
    \right) 
    > \frac{1}{\sqrt{\frac{p}{pArb(p)}}} \sqrt{p} \left(1-\mathbb{E}[\sqrt{p_\Delta}]\right).
\end{align*}
Since the two sides are equal, at time $t=T-\Delta$, the arbitrageur receives the same expected profit by arbitraging or waiting. In other words, one decision is not advantageous over the other.

Additionally, it follows that the optimal profit for $t=T-\Delta$ can be computed using the term for waiting:
\begin{align*}
    opt(p, t_w, T-\Delta) &= \mathbb{E}\left[opt(p\cdot p_\Delta, t_w+\Delta, T) \right]\\
    &= \mathbb{E}\left[profit(p\cdot p_\Delta) \right] = \frac{1}{2}p - \sqrt{p}\cdot \mathbb{E}[\sqrt{p_\Delta}] + \frac{1}{2}
\end{align*}

The previous arguments can now inductively be applied for~$t=T-2\Delta, T-3\Delta,\ldots$ to show that arbitraging and waiting also have the same expected profit at these times, and that
\begin{align*}
    opt(p, t_w, T-k\Delta) &= \mathbb{E}\left[opt(p\cdot p_\Delta, t_w+\Delta, T-(k-1)\Delta)\right] \\
    &= \mathbb{E}\left[\frac{1}{2}p\cdot p_\Delta - \mathbb{E}[\sqrt{p_\Delta}]^{k-1} \sqrt{p \cdot p_\Delta} + \frac{1}{2}\right] \\
    &= \frac{1}{2}p - \sqrt{p}\cdot \mathbb{E}[\sqrt{p_\Delta}]^{k} + \frac{1}{2}.
\end{align*}
\qed
\end{proof}
For a geometric Brownian motion where $\sqrt{p_\Delta}$ is log-normally distributed (since $p_\Delta$ is), and
\begin{align*}
    \mathbb{E}[\sqrt{p_\Delta}] = e^{-\frac{\sigma^2}{4}\Delta}e^{\frac{(\sigma/2)^2}{2}\Delta} = e^{-\frac{\sigma^2}{8}\Delta},
\end{align*}
it follows that the optimal expected profit over the whole period is
\begin{align*}
    opt(p, t_w, 0) = \frac{1}{2}p - \sqrt{p}\cdot e^{-\frac{\sigma^2}{8} T} + \frac{1}{2}.
\end{align*}

\subsection{Simulation Results}

The optimal strategy can also be computed numerically for various price change distributions $p_\Delta$. To achieve this, we discretize the range of possible price changes between consecutive time steps. Recall that both time $t$ and time within the window $t_w$ are already discretized in our model. Specifically, we consider equally spaced discrete values within an interval $[1-p_{max}, 1+p_{max}]$ as potential price changes.

Using this approach, the optimal strategy consists of the optimal decision -- whether to wait or arbitrage -- for each state described by the triples $(p,t_w,t)$, with $1-p_{max}\leq p\leq 1+p_{max}$, $0\leq t_w\leq T_w$, and $0\leq t\leq T$.
These decisions are based on the expected profits from waiting and arbitraging in each state, i.e., the values of $wait()$ and $arb()$ for the respective parameters.
These values can be computed using dynamic programming. The values for $t=T$ are straightforward, as arbitraging is the optimal option for the time-advantaged arbitrageur at the final time step. The values for time $t=T-\Delta$ can then be computed recursively using the values for $t=T$ following the formulas provided in Section \ref{sec:model}, and this process continues backward through time.

We run the simulations using the following parameters, which are intended for implementation on Arbitrum:\footnote{Experimenting with various parameter values did not produce significantly different results.} $\Delta=10$ ms, $T_w=200$ ms, $T=1$ min, and consider commonly used values for the pool's trading fee, including 0.05\% and 0.3\%.
For the price of the volatile token, we analyze the following two scenarios.

\paragraph{Geometric Brownian motion.}

When prices follow a geometric Brownian motion -- the standard model for asset prices -- prices changes in subsequent time intervals are independent and log-normally distributed.

We simulate the optimal strategy for this type of price distribution, assuming zero drift and a daily volatility ranging from 1\% to 10\%. Our results show that the best strategy for the time-advantaged arbitrageur is to wait for the entire duration of their time advantage. Specifically, for any relative price difference $p$ where $profit(p)>0$ and for $t_w<T_w$ and $t<T$, we find that $wait(p,t_w,t)>arb(p,t_w,t)$, i.e., the expected profit from waiting exceeds that from executing the arbitrage immediately.

\paragraph{Historical price change distribution.}

In practice, cryptocurrency prices may deviate from a geometric Brownian motion and exhibit different behaviors.
We incorporate historical price data into our analysis to explore optimal arbitrage extraction under more realistic conditions.

We gathered one month of bid and ask prices for various trading pairs from March 2024.\footnote{We consider perpetual futures prices since Binance -- the largest cryptocurrency exchange -- offers full historical data for their top order book levels while only providing second-by-second price data for spot pairs.
Perpetual futures often have higher trading volumes compared to spot markets, emphasizing that their prices are likely to accurately reflect underlying asset prices.} Using this data, we compute the historical distribution of the change in the mid-price over a time of $\Delta=10$ ms.

We then assume that price changes in each time interval $\Delta$ are independent and follow the historical distribution we obtained. Based on this assumption, we compute the optimal strategy for the time-advantaged arbitrageur.

We perform this analysis for several trading pairs, including Ethereum (ETH-USDT) and Shiba Inu (SHIB-USDT).
In each case, our analysis consistently demonstrates that the optimal strategy is to wait for the full duration of the arbitrageur's time advantage.

\section{Arbitrage Profits}

After analyzing the optimal strategy in a scenario involving a single time-advantaged arbitrageur, we will proceed to estimate the potential arbitrage profits that can be extracted in this context. Furthermore, we will compare these results to other transaction sequencing mechanisms, such as \gls{FCFS} and \gls{PGA}. Specifically, we will examine the following settings:

\paragraph{\glsfirst{FCFS}.}
In this scenario, arbitrageurs compete to be the first to execute a trade when they detect an arbitrage opportunity. This results in all arbitrage opportunities being immediately exploited as soon as they arise. We compute the total arbitrage profit (extracted by all arbitrageurs combined) in this setting by simulating any arbitrage opportunity being captured instantly after it arises.

\paragraph{\glsfirst{PGA}.}
In this setting, all arbitrageurs observe potential opportunities just before the next block is created. They then compete to be the first transaction included in the block to exploit the arbitrage opportunity. In a highly competitive environment, arbitrageurs will bid almost all their potential profits as priority fees, effectively passing almost all arbitrage profits on to the entity that creates the blocks. We simulate this by assuming that all available arbitrage profits are extracted when a new block is generated.
We compute the total arbitrage profits in this scenario by simulating that each time a new block is created, a successful arbitrage extracts the maximal possible profit at this moment. Note that blocks are assumed to be created at regular intervals.

\paragraph{\gls{FCFS} with a time-advantaged actor (Timeboost).}
As explored in the previous sections, it is optimal for the time-advantaged arbitrageur to fully utilize their time advantage and wait precisely this amount before extracting an available arbitrage opportunity.
This scenario is simulated as follows: Whenever a profitable arbitrage opportunity arises, we wait for the exact length of the time advantage and then execute an arbitrage trade that extracts the maximum possible profit.
Note that since the time-advantaged arbitrageur always submits an arbitrage transaction before the delayed transactions of the other arbitrageurs observing the opportunity would be executed, all arbitrage profits are captured by the time-advantaged arbitrageur. Most of these profits would be transferred to the auctioneer in a competitive auction.

\begin{table}[htb]
\centering
  \begin{subtable}[b]{1.0\textwidth}
  \centering
    \begin{tabular}{ |l| c | c | c | } 
     \hline
      & mean & median & 99th percentile \\ \hline
     \hline
     FCFS & 1.69 & 0.27 & 13.62 \\ \hline
     PGA (200ms) & 2.63 & 0.51 & 26.27 \\ \hline
     Timeboost (200ms) & 4.28 & 0.63 & 46.58 \\ \hline
    \end{tabular}
  \caption{ETH-USDT (volatility: 4.2\%, autocorrelation: 2.3\%), pool fee: 0.05\%}
  \label{tab:arb_profits_eth_usdt1}
  \end{subtable}
  \begin{subtable}[b]{1.0\textwidth}
  \centering
    \begin{tabular}{ |l| c | c | c | } 
     \hline
      & mean & median & 99th percentile \\ \hline
     \hline
     FCFS & 0.57 & 0.0 & 6.71 \\ \hline
     PGA (200ms) & 0.89 & 0.0 & 13.16 \\ \hline
     Timeboost (200ms) & 1.82 & 0.0 & 19.47 \\ \hline
    \end{tabular}
  \caption{ETH-USDT (volatility: 4.2\%, autocorrelation: 2.3\%), pool fee: 0.3\%}
  \label{tab:arb_profits_eth_usdt2}
  \end{subtable}
  \begin{subtable}[b]{1.0\textwidth}
  \centering
    \begin{tabular}{ |l| c | c | c | } 
     \hline
      & mean & median & 99th percentile \\ \hline
     \hline
     FCFS & 0.72 & 0.00 & 5.08 \\ \hline
     PGA (200ms) & 0.72 & 0.00 & 5.94 \\ \hline
     Timeboost (200ms) & 0.67 & 0.00 & 6.48 \\ \hline
    \end{tabular}
  \caption{ETH-BTC (volatility: 1.8\%, autocorrelation: -1.1\%), pool fee: 0.05\%}
  \label{tab:arb_profits_eth_btc}
  \end{subtable}
  \begin{subtable}[b]{1.0\textwidth}
  \centering
    \begin{tabular}{ |l| c | c | c | } 
     \hline
      & mean & median & 99th percentile \\ \hline
     \hline
     FCFS & 5.99 & 0.00 & 33.95 \\ \hline
     PGA (200ms) & 5.08 & 0.00 & 42.65 \\ \hline
     Timeboost (200ms) & 5.27 & 0.00 & 45.80 \\ \hline
    \end{tabular}
  \caption{ARB-ETH (volatility: 3.4\%, autocorrelation: -15.3\%), pool fee: 0.05\%}
  \label{tab:arb_profits_arb_eth}
  \end{subtable}
\caption{Minutely arbitrage profits in \$ extracted in each scenario for a constant product \gls{AMM} pool with~\$100,000,000 worth of (active) liquidity.
For each trading pair, we report the daily volatility and the autocorrelation of~50~ms returns.}
\label{tab:arbitrage_profits}
\end{table}

\subsection{Simulation Results}

We perform all simulations using one month of Binance price data from March~2024. Arbitrage profits are calculated for each one-minute interval, which corresponds to the proposed time for which a time advantage is auctioned off under Timeboost's current proposal~\cite{arb_timeboost_2024}.
The simulations are run on a \gls{CPMM} pool with \$100 million in liquidity, or equivalently, a pool utilizing concentrated liquidity with this amount of active liquidity.
This is the order of magnitude of sizes of Uniswap pools for these trading pairs on rollups like Arbitrum at the time of writing.
Specifically, we simulate arbitrage profits for each minute using historical Binance prices for that period. We then assume that the pool initially holds \$100 million in liquidity at the beginning of each minute and is adjusted solely by the arbitrage trades occurring throughout the minute.

Table \ref{tab:arbitrage_profits} reports the average arbitrage profits by minute for several trading pairs.
As expected, profits are higher for more volatile pairs (compare ETH-USDT 0.05\% to ETH-BTC 0.05\%) and lower fee pools (compare ETH-USDT 0.05\% to ETH-BTC 0.3\%).

For the ETH-USDT pair, arbitrageurs earn significantly higher profits with \glspl{PGA} and Timeboost compared to the \gls{FCFS} setting. This aligns with the result from the previous section, where the optimal strategy for a time-advantaged arbitrageur is to wait as long as possible before executing the arbitrage. On the other hand, always arbitraging immediately would lead to the same total arbitrage profits as under \gls{FCFS}.

Note that the analysis deriving the optimal strategy under Timeboost assumes independently distributed price changes. However, when prices exhibit mean reversion -- i.e., negative autocorrelation -- profits under \gls{FCFS} can match or even exceed those from a waiting strategy with Timeboost, as seen with the ETH-BTC and ARB-ETH pairs. In cases of negative autocorrelation, where it is more likely that a price move that opens an arbitrage opportunity is followed by a reversal, it becomes more advantageous to exploit arbitrage opportunities immediately.

Furthermore, Figure \ref{fig:profit_distribution_timeboost} illustrates the distribution of minutely arbitrage profits in the Timeboost scenario. The histogram of profits in Figures \ref{fig:profit_histogram} reveals profits are generally small during most 1-minute intervals, with a long tail of occasional minutes during which extraordinarily large profits occur.
The cumulative profits, shown in Figure \ref{fig:profit_cumulative}, indicate that while profits surge during periods of high price volatility, the majority of gains come from steady growth over time.
The corresponding figures for the \gls{FCFS} and \gls{PGA} settings can be found in Appendix \ref{app:extra_figs} and exhibit very similar characteristics.

\begin{figure}[htp]
\centering
\begin{subfigure}{.5\textwidth}
  \centering
  \includegraphics[width=\linewidth]{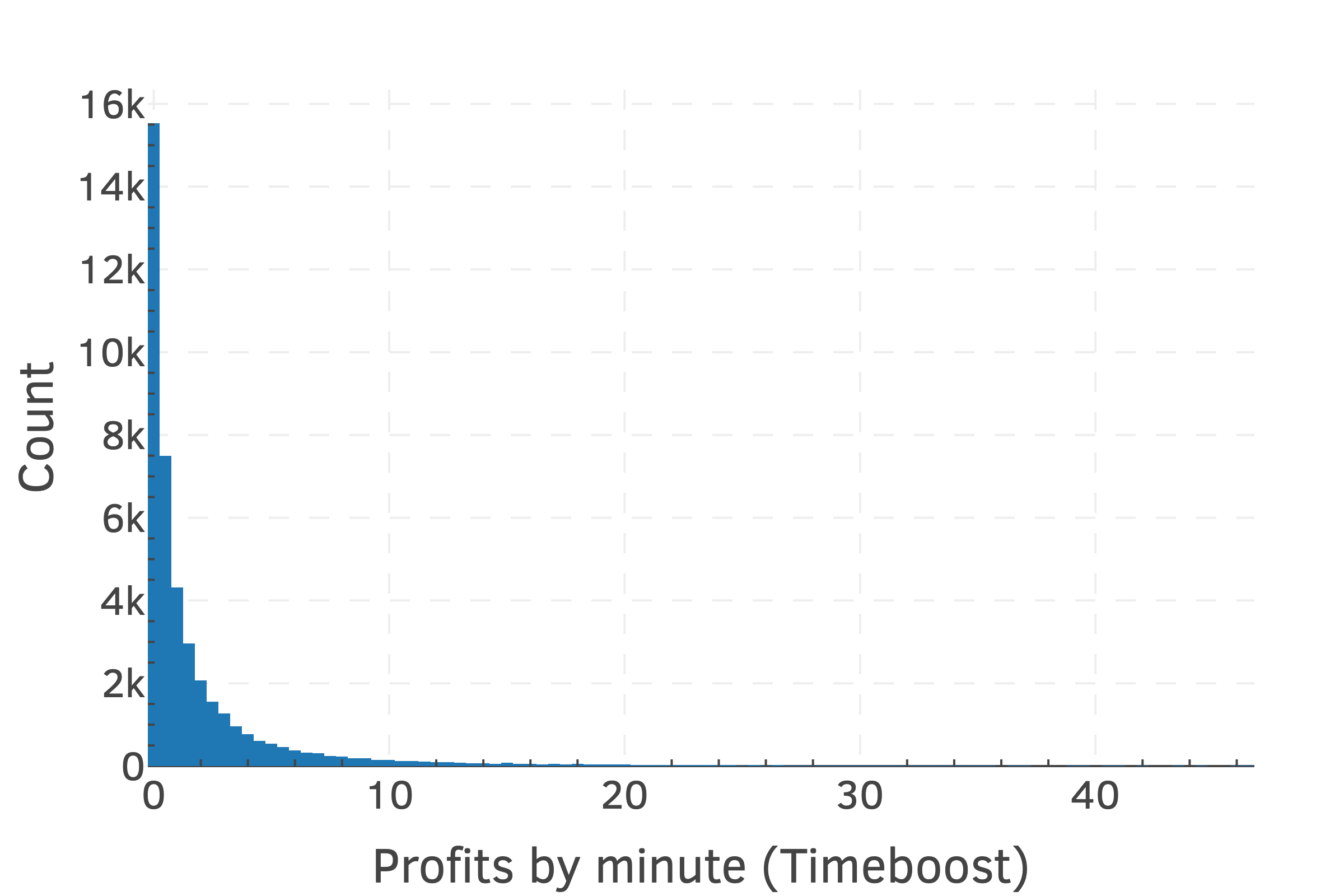}
  \caption{Histogram of minutely profits in \$.}
  \label{fig:profit_histogram}
\end{subfigure}%
\begin{subfigure}{.5\textwidth}
  \centering
  \includegraphics[width=\linewidth]{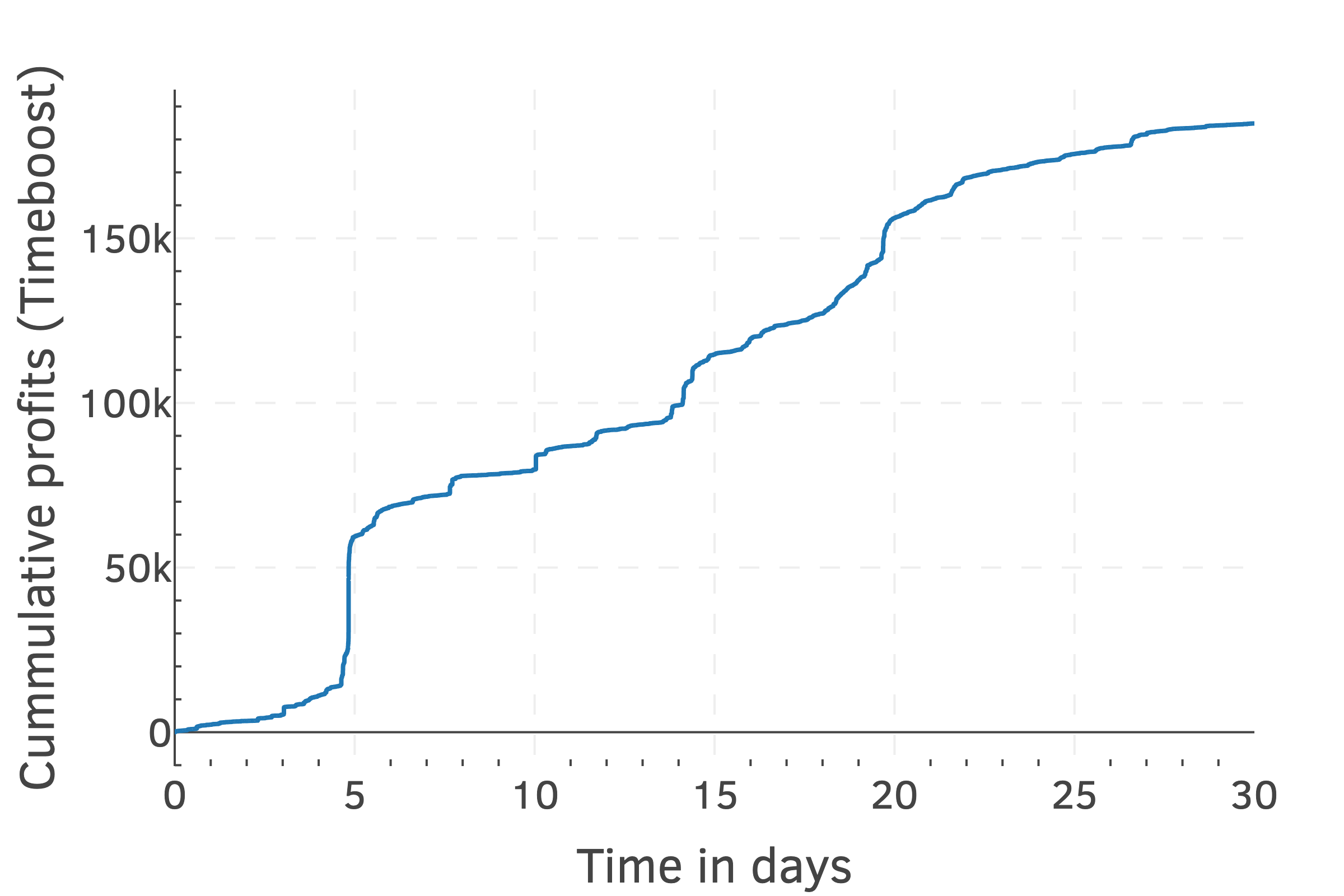}
  \caption{Cumulative profits over time.}
  \label{fig:profit_cumulative}
\end{subfigure}
\caption{Distribution of arbitrage profits for a ETH-USDT 0.05\% pool with \$100m liquidity in the Timeboost setting.}
\label{fig:profit_distribution_timeboost}
\end{figure}

The time advantage in Timeboost and the block time in a \gls{PGA} both represent the maximum time a regular transaction takes to appear on-chain, assuming no congestion.
Table \ref{tab:arb_profits_times} reports the mean arbitrage profits for various values of this parameter. As time intervals increase, profits increase for both \gls{PGA} and Timeboost while maintaining their relative relationship. This increase is consistent with other theoretical \cite{milionis2023automated} and empirical~\cite{fritsch2024arblosses} findings on the relationship between arbitrage profits and block times.

\begin{table}[htp]
    \centering
    \begin{tabular}{ |l| c | c | c | c | c |} 
         \hline
          & 200ms & 500ms & 1s & 2s & 5s \\ \hline
         \hline
         PGA & 2.63 & 3.31 & 3.88 & 4.68 & 5.81 \\ \hline
         Timeboost & 4.28 & 5.65 & 7.01 & 8.43 & 9.54 \\ \hline
         
    \end{tabular}
    \caption{Mean minutely arbitrage profit for a \$100M ETH-USDT pool with a~0.05\% fee for different time parameters}
    \label{tab:arb_profits_times}
\end{table}

\subsection{Capturing Arbitrage Profits for AMMs under the Timeboost Policy}

So far we have explored the size of arbitrage profits and how they are divided between rollups and arbitrageurs under different transaction ordering policies. However, the ultimate goal is arguably to capture and retain these profits at their source -- in the AMM pool.

Attributing \gls{MEV} and returning (part of) it to users is an actively studied area, with multiple theoretical and practical proposals, such as Adaptive Curves~\cite{opt_adapt} and \gls{MEV} Taxes~\cite{robinson_priority_2024}.
Under a time-advantage policy (such as Timeboost), one potential general approach to MEV capture would involve separately auctioning time advantages for interacting with specific contract addresses, such as AMM pools. The proceeds from each auction could then be returned to the relevant contract.

Another approach to mitigating the one-sided value extraction, which also works with the simple Timeboost policy (featuring a single time advantage auction), is to label transactions from the time-advantaged arbitrageur. This enables AMMs to adapt their trading behavior: pool smart contracts could apply different fees to these transactions or adjust the market-making function specifically for them. Additionally, the pool can limit the number of time-advantaged transactions to one per $T_w$.\footnote{Without this limit, arbitrageurs could break up their trades into smaller parts and submit them sequentially. In fact, splitting trades into infinitely many parts would render any adjustments to the market-making function ineffective; see Section 3.2 in \cite{canidio2024arbitrageursprofitslvrsandwich}.}

In the following, we analyze how much arbitrage profit AMMs could capture through this approach. Without any adaptation, the arbitrageur’s profit from moving the pool’s marginal price from $P_0$ to the true price $P$, given pool reserves of $(X,Y)$, equals $profit(P_0,P):=XP-2\sqrt{XYP}+Y$. We assume the pool attempts to capture a $\alpha$ fraction of this profit.
In other words, the AMM will offer a trade that shifts its marginal price to $P$ while incurring a loss of $(1-\alpha)profit(P_0,P)$, for any $P$.
Hence, it allows its reserves to be changed to $(X_1,Y_1)$, such that $\frac{Y_1}{X_1}=P$, and $(X-X_1)P-(Y_1-Y) = (1-\alpha)profit(P_0,P)$.
This fully determines the AMM's trading behavior for time-advantaged transactions. The AMM allows changing its reserves to:
\begin{align*}
    (X_1, Y_1) = \left( (1-\alpha)\sqrt{\frac{XY}{P_1}} + \frac{\alpha}{2}\left(X+\frac{Y}{P_1}\right), (1-\alpha)\sqrt{XYP_1} + \frac{\alpha}{2}\left(XP_1+Y\right) \right)
\end{align*}
However, the time-advantaged arbitrageur, as a best response strategy, may no longer move the current marginal price to the true price $P$ but to some intermediate price $P_1$. By doing so, the time-advantaged arbitrageur makes the following profit:
\begin{multline*}
(X-X_1)P - (Y_1-Y) = \\
Y\left((1-\alpha)p-\frac{\alpha}{2}\frac{p}{p_1} - (1-\alpha)p\frac{1}{\sqrt{p_1}} - (1-\alpha)\sqrt{p_1} - \frac{\alpha}{2}p_1+(1-\frac{\alpha}{2})\right),
\end{multline*}
where $p=\frac{P}{P_0}$ and $p_1=\frac{P_1}{P_0}$. The arbitrageur will choose $P_1$, and thereby $p_1$, to maximize this expression.

Setting $\alpha=0$ corresponds to a standard CPMM pool where the arbitrageur extracts all value, which we call {\it maximum arbitrage profit}. It is equal to $MAP:=Y(p-2\sqrt{p}+1)$. In contrast, when $\alpha=1$, the pool aims to capture all the value. In this case, the time-advantaged arbitrageur's best response is to set $p_1=\sqrt{p}$, yielding a profit of $Y(\frac{1}{2}p-\sqrt{p}+\frac{1}{2}) = \frac{1}{2}MAP$.

For a given $\alpha$ and $p$, the first order condition for finding the best response of the time-advantaged arbitrageur, $p_1$, is: 
\begin{equation*}
    \frac{\alpha p}{2p_1^2}+\frac{(1-\alpha) p}{2 p_1^{3/2}} - \frac{1-\alpha}{2p^{1/2}} -\frac{\alpha}{2}=0.
\end{equation*}
This 4th-degree polynomial with respect to $p_1^{-1/2}$ has an analytical solution. We find that the pool returns are maximized when $\alpha$ is set to $1$, accounting for $\frac{1}{4}MAP$. In this scenario, $\frac{1}{2}MAP$ goes to the time-advantaged arbitrageur, while the remaining $\frac{1}{4}MAP$ stays unrealized, potentially being captured by other arbitrageurs.

\section{Conclusion}

In this work, we study the impact of adding a time advantage to a \gls{FCFS} transaction ordering rule for non-atomic arbitrage and compare the sequencing policies currently employed by the main \glspl{L2}, namely, the original \gls{FCFS} and \glspl{PGA}.

We analyze the optimal strategy for an arbitrageur with a time advantage, which involves deciding whether to arbitrage or wait, depending on which action offers a higher expected profit. When the pool has no trading fees and the expected future price is equal to the current price, we show that the arbitrageur is indifferent between arbitraging and waiting. We extend our analysis by computing the optimal strategy numerically for a scenario with empirical price change distributions and non-zero trading fees. Our findings indicate that the optimal strategy for the arbitrageur is to wait until the time window expires.

Then, we examine arbitrage profits under three sequencing policies -- \gls{FCFS}, \gls{PGA}, and Timeboost -- for several trading pairs. For most pairs, arbitrageurs can extract higher profits under Timeboost than \gls{FCFS} and \gls{PGA}. However, when prices exhibit mean reversion, profits under \gls{FCFS} can match or even exceed those from a waiting strategy with Timeboost. We also observe that more volatile pairs and lower-fee pools offer greater arbitrage opportunities. 

Finally, to mitigate arbitrage profit extraction by arbitrageurs under a time advantage, if the party with the latency advantage can be identified, we propose a mechanism that returns $25\%$ of the available arbitrage gains to the liquidity providers.




%
%
%
\bibliographystyle{splncs04nat}
\bibliography{references}

\begin{thebibliography}{15}
\providecommand{\natexlab}[1]{#1}
\providecommand{\url}[1]{\texttt{#1}}
\providecommand{\urlprefix}{URL }
\expandafter\ifx\csname urlstyle\endcsname\relax
  \providecommand{\doi}[1]{doi:\discretionary{}{}{}#1}\else
  \providecommand{\doi}{doi:\discretionary{}{}{}\begingroup \urlstyle{rm}\Url}\fi

\bibitem[{Angeris et~al.(2021)Angeris, Kao, Chiang, Noyes, and Chitra}]{angeris21uniswap}
Angeris, G., Kao, H.T., Chiang, R., Noyes, C., Chitra, T.: An analysis of uniswap markets. Cryptoeconomic Systems \textbf{1}(1) (Apr 2021), \doi{10.21428/58320208.c9738e64}

\bibitem[{{Arbitrum Foundation}(2024)}]{arb_timeboost_2024}
{Arbitrum Foundation}: Constitutional {AIP}: {Proposal} to adopt {Timeboost}, a new transaction ordering policy (Jun 2024), \urlprefix\url{https://forum.arbitrum.foundation/t/constitutional-aip-proposal-to-adopt-timeboost-a-new-transaction-ordering-policy/25167}, accessed on September 6, 2024

\bibitem[{Canidio and Fritsch(2024)}]{canidio2024arbitrageursprofitslvrsandwich}
Canidio, A., Fritsch, R.: Arbitrageurs' profits, lvr, and sandwich attacks: batch trading as an amm design response. arXiv preprint arXiv:2307.02074  (2024)

\bibitem[{Daian et~al.(2020)Daian, Goldfeder, Kell, Li, Zhao, Bentov, Breidenbach, and Juels}]{daian_flashboys_2020}
Daian, P., Goldfeder, S., Kell, T., Li, Y., Zhao, X., Bentov, I., Breidenbach, L., Juels, A.: Flash boys 2.0: Frontrunning in decentralized exchanges, miner extractable value, and consensus instability. In: 2020 IEEE Symposium on Security and Privacy (SP), pp. 910--927 (2020), \doi{10.1109/SP40000.2020.00040}

\bibitem[{{DeFi Llama}(2024)}]{2024DeFiCategories}
{DeFi Llama}: {Protocol Categories} (2024), \urlprefix\url{https://defillama.com/categories}, accessed on June 10, 2024

\bibitem[{Fritsch and Canidio(2024)}]{fritsch2024arblosses}
Fritsch, R., Canidio, A.: {Measuring Arbitrage Losses and Profitability of AMM Liquidity}. In: Companion Proceedings of the ACM Web Conference 2024, p. 1761–1767, WWW '24, Association for Computing Machinery, New York, NY, USA (2024), ISBN 9798400701726, \doi{10.1145/3589335.3651961}

\bibitem[{Gogol et~al.(2024)Gogol, Messias, Miori, Tessone, and Livshits}]{gogol_layer2_arbitrage_2024}
Gogol, K., Messias, J., Miori, D., Tessone, C., Livshits, B.: Layer-2 arbitrage: An empirical analysis of swap dynamics and price disparities on rollups. arXiv preprint arXiv:2406.02172  (2024)

\bibitem[{Heimbach et~al.(2024)Heimbach, Pahari, and Schertenleib}]{heimbach2024nonatomic}
Heimbach, L., Pahari, V., Schertenleib, E.: {Non-Atomic Arbitrage in Decentralized Finance}. In: 2024 IEEE Symposium on Security and Privacy (SP), pp. 224--224, IEEE Computer Society, Los Alamitos, CA, USA (May 2024), ISSN 2375-1207, \doi{10.1109/SP54263.2024.00256}

\bibitem[{{L2Beat}(2024)}]{2024L2BeatTVL}
{L2Beat}: {Value Locked} (2024), \urlprefix\url{https://l2beat.com}, accessed on June 10, 2024

\bibitem[{Milionis et~al.(2023)Milionis, Moallemi, and Roughgarden}]{milionis2023automated}
Milionis, J., Moallemi, C.C., Roughgarden, T.: {Automated Market Making and Arbitrage Profits in the Presence of Fees}. arXiv preprint arXiv:2305.14604  (2023)

\bibitem[{Milionis et~al.(2022)Milionis, Moallemi, Roughgarden, and Zhang}]{milionis2022automated}
Milionis, J., Moallemi, C.C., Roughgarden, T., Zhang, A.L.: Automated market making and loss-versus-rebalancing. arXiv preprint arXiv:2208.06046  (2022)

\bibitem[{Nadkarni et~al.(2024)Nadkarni, Kulkarni, and Viswanath}]{opt_adapt}
Nadkarni, V., Kulkarni, S., Viswanath, P.: Adaptive curves for optimally efficient market making. In: B{\"{o}}hme, R., Kiffer, L. (eds.) 6th Conference on Advances in Financial Technologies, {AFT} 2024, September 23-25, 2024, Vienna, Austria, LIPIcs, vol. 316, pp. 25:1--25:22, Schloss Dagstuhl - Leibniz-Zentrum f{\"{u}}r Informatik (2024), \doi{10.4230/LIPICS.AFT.2024.25}, \urlprefix\url{https://doi.org/10.4230/LIPIcs.AFT.2024.25}

\bibitem[{Qin et~al.(2022)Qin, Zhou, and Gervais}]{qin_quantifying_mev_2022}
Qin, K., Zhou, L., Gervais, A.: Quantifying blockchain extractable value: How dark is the forest? In: 2022 IEEE Symposium on Security and Privacy (SP), pp. 198--214 (2022), \doi{10.1109/SP46214.2022.9833734}

\bibitem[{Robinson and White(2024)}]{robinson_priority_2024}
Robinson, D., White, D.: Priority {Is} {All} {You} {Need} (Jun 2024), \urlprefix\url{https://www.paradigm.xyz/2024/06/priority-is-all-you-need}

\bibitem[{Torres et~al.(2024)Torres, Mamuti, Weintraub, Nita-Rotaru, and Shinde}]{torres_rolling_2024}
Torres, C.F., Mamuti, A., Weintraub, B., Nita-Rotaru, C., Shinde, S.: Rolling in the {Shadows}: {Analyzing} the {Extraction} of {MEV} {Across} {Layer}-2 {Rollups}. arXiv preprint arXiv:2405.00138  (2024)

\end{thebibliography}

\appendix

\section{Arbitrage Profits for Constant Product Pools}
\label{annex:cpmm-profit}

Consider a \gls{CPMM} pool with trading fee $f$.
In the following, we calculate the profit of the \emph{maximal arbitrage trade}, that is, the trade that produces the maximum arbitrage profit, depending on the relative price difference between the AMM pool and the external market. Recall that we assume the price impact of the arbitrage trade on the external market to be negligible.

Let $X, Y$ denote the \gls{CPMM}'s reserves. We assume that the token $Y$ is the numéraire asset and measure the arbitrage profits in this token.
For readability, let the tokens in the pool be ETH and USD.
Then, the initial price of ETH on the \gls{CPMM} is $P_0 = \frac{Y}{X}$.

Let $P$ be the external market price, meaning the relative price difference is $p=\frac{P}{P_0}$.
If $(1-f)P_0\leq P\leq \frac{1}{1-f}P_0$, no profitable arbitrage is possible since the marginal price of buying (selling) ETH on the \gls{CPMM} is higher (lower) than the external price.

If $P>\frac{1}{1-f}P_0$, the marginal price of buying ETH on the \gls{CPMM} is lower than the external market price. Hence, there is an arbitrage opportunity.
To exploit this, the arbitrageur will buy $\Delta x$ ETH for $\Delta y$ USD from the \gls{CPMM} and, at the same time, sell $\Delta x$ ETH for $\Delta x\cdot P$ USD on the external market. Note that of the $\Delta y$ USD paid for the \gls{CPMM} trade,  $f\Delta y$ is paid as the trading fee, and  $(1-f)\Delta y$ is used for the swap according to the constant product rule.

For the arbitrage trade yielding the maximal profit, the marginal price of buying ETH from the pool after the trade (considering trading fees) will equal the external price $P$.\footnote{This assumes that the marginal trade price (taking trading fees into account) equals the marginal pool price after the trade (taking trading fees into account), which is true if trading fees are not added to the pool (as in Uniswap v3). Adding trading fees to the pool (as in Uniswap v2) creates a discontinuity in the marginal pool price and leads to more complex expressions.}
Indeed, a lower marginal price would mean that infinitesimally increasing the trade size yields an extra profit, while a higher marginal price implies that the 
last infinitesimal part of the trade comes at a loss, and a smaller arbitrage trade would yield a higher profit.
When the marginal price of buying ETH (including fees) from the pool equals $P$, the pool's marginal price without fees (i.e., the ratio of its reserves) equals $(1-f)P$. 
Hence, the maximal arbitrage trade moves the pool's price to $(1-f)P$, leaving a relative difference of $\frac{1}{1-f}$ to the external price.
Moreover, the pool's reserves after the arbitrage trade are
\begin{align*}
    \left( \sqrt{\frac{XY}{(1-f)P}}, \sqrt{XY (1-f)P} \right).
\end{align*}
This means, the arbitrageur trades
\begin{align*}
    \Delta x &= X - \frac{\sqrt{XY}}{\sqrt{(1-f)P}} \\
    \Delta y &= \frac{1}{1-f}\left(\sqrt{XY}\sqrt{(1-f)P} - Y\right),
\end{align*}
and makes a profit of
\begin{align*}
    \Delta x\cdot P - \Delta y &= XP - \sqrt{XY}\sqrt{\frac{P}{1-f}} - \sqrt{XY}\sqrt{\frac{P}{1-f}} + \frac{1}{1-f}Y \\
    &= Y\left( \frac{P}{P_0} - \frac{2}{\sqrt{1-f}}\sqrt{\frac{P}{P_0}} +\frac{1}{1-f}\right).
\end{align*}
The above formula holds for $P>\frac{1}{1-f}P_0$. For $P<(1-f)P_0$, the marginal price of selling ETH on the AMM is higher than the external market price. Hence, the arbitrageur will buy ETH on the external market and sell it to the AMM.
Arguing analogously to the previous case shows that the maximal arbitrage trade will move the pool's price to $\frac{1}{1-f}P$, leaving a relative difference of $1-f$ to the external price. Hence, the pool's price after the maximal arbitrage is given by
\begin{align*}
    pArb(p) = \begin{cases}
    \frac{1}{1-f}  & \text{if } p > \frac{1}{1-f} \\
    p & \text{if } 1-f < p < \frac{1}{1-f} \\
    1-f & \text{if } p < 1-f
\end{cases}
\end{align*}
Furthermore, similar calculations to the previous case yield that the arbitrageur makes a profit of
\begin{align*}
    Y\left( \frac{1}{1-f}\frac{P}{P_0} - \frac{2}{\sqrt{1-f}}\sqrt{\frac{P}{P_0}} + 1 \right).
\end{align*}
Since $Y$ represents half of the initial pool value in USD, the second term can be interpreted as (two times) the arbitrage profit relative to the pool value.
Therefore, the maximal arbitrage profit relative to the initial pool value in USD, i.e., $2Y$, can be written as a function of the relative price difference $p=\frac{P}{P_0}$:
\begin{align*}
    profit(p) = \begin{cases}
    \frac{1}{2}p - \frac{1}{\sqrt{1-f}}\sqrt{p} +\frac{1}{2}\frac{1}{1-f}  & \text{if } p > \frac{1}{1-f} \\
    0 & \text{if } 1-f < p < \frac{1}{1-f} \\
    \frac{1}{2}\frac{1}{1-f}p - \frac{1}{\sqrt{1-f}}\sqrt{p} +\frac{1}{2} & \text{if } p < 1-f
\end{cases}
\end{align*}

Figure \ref{fig:profit_function} provides an example of the profit function for a trading fee of $f=0.05\%$

Next, we consider the relative change in the pool's value resulting from the maximal arbitrage trade executed at the relative price difference $p$. For $P>\frac{1}{1-f}P_0$, the pool reserves change $(X,Y)$ to $\left( \sqrt{\frac{XY}{(1-f)P}}, \sqrt{XY (1-f)P} \right)$.
In particular, the pool's value at the respective pool price changes from $2Y$ to $2\sqrt{XY(1-f)P}$, i.e.\ by a factor of $\sqrt{1-f}\sqrt{\frac{P}{P_0}}$. Applying the same reasoning to the case where $P < (1-f)P_0$ results in
\begin{align*}
    poolVal(p) = \sqrt{\frac{p}{pArb(p)}} = \begin{cases}
    \sqrt{1-f} \sqrt{p}  & \text{if } p > \frac{1}{1-f} \\
    1 & \text{if } 1-f < p < \frac{1}{1-f} \\
    \frac{1}{\sqrt{1-f}}\sqrt{p} & \text{if } p < 1-f.
\end{cases}
\end{align*}

\section{Extra Figures}
\label{app:extra_figs}

\begin{figure}[ht]
\centering
\begin{subfigure}{.5\textwidth}
  \centering
  \includegraphics[width=\linewidth]{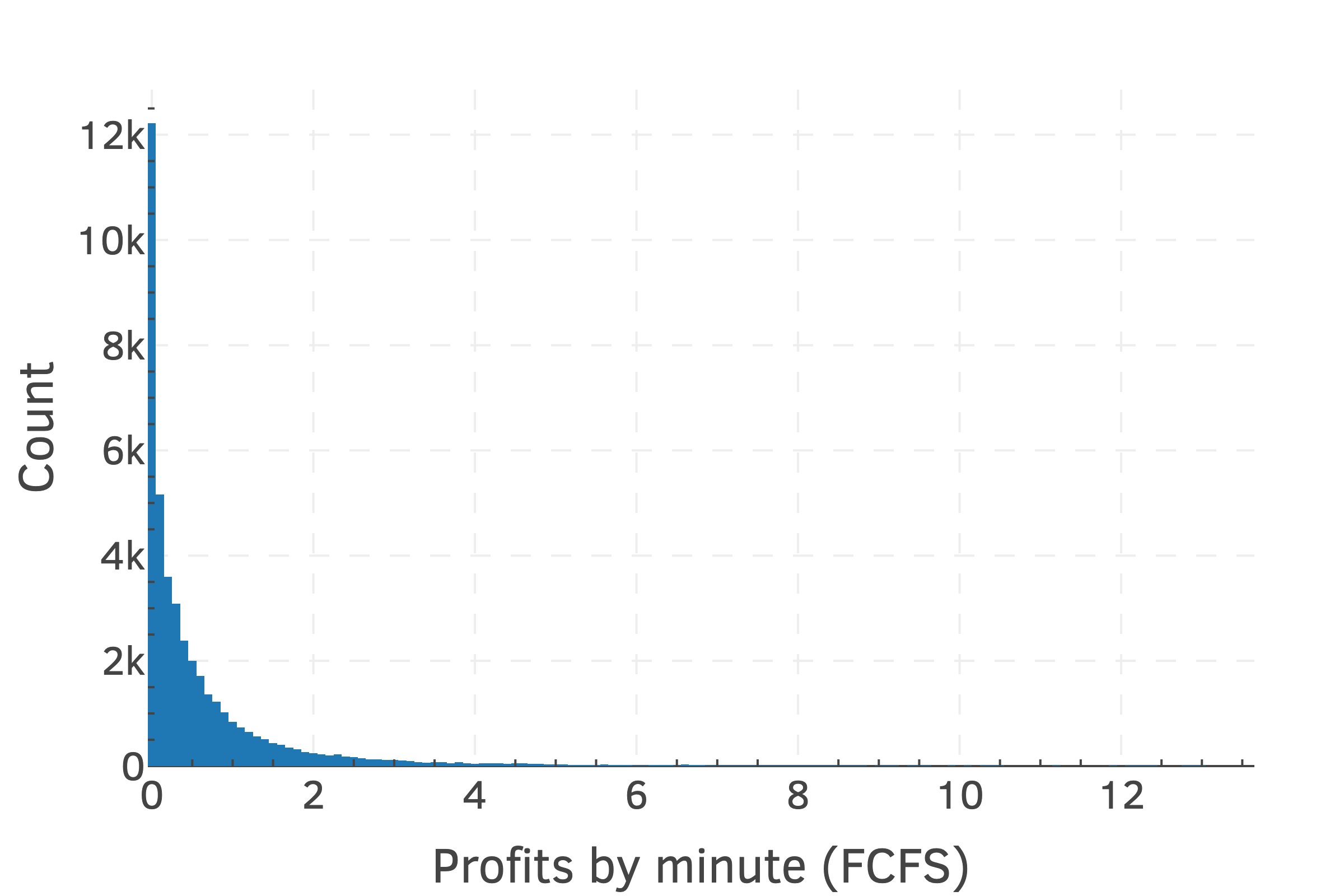}
  \caption{Histogram of minutely profits in \$.}
\end{subfigure}%
\begin{subfigure}{.5\textwidth}
  \centering
  \includegraphics[width=\linewidth]{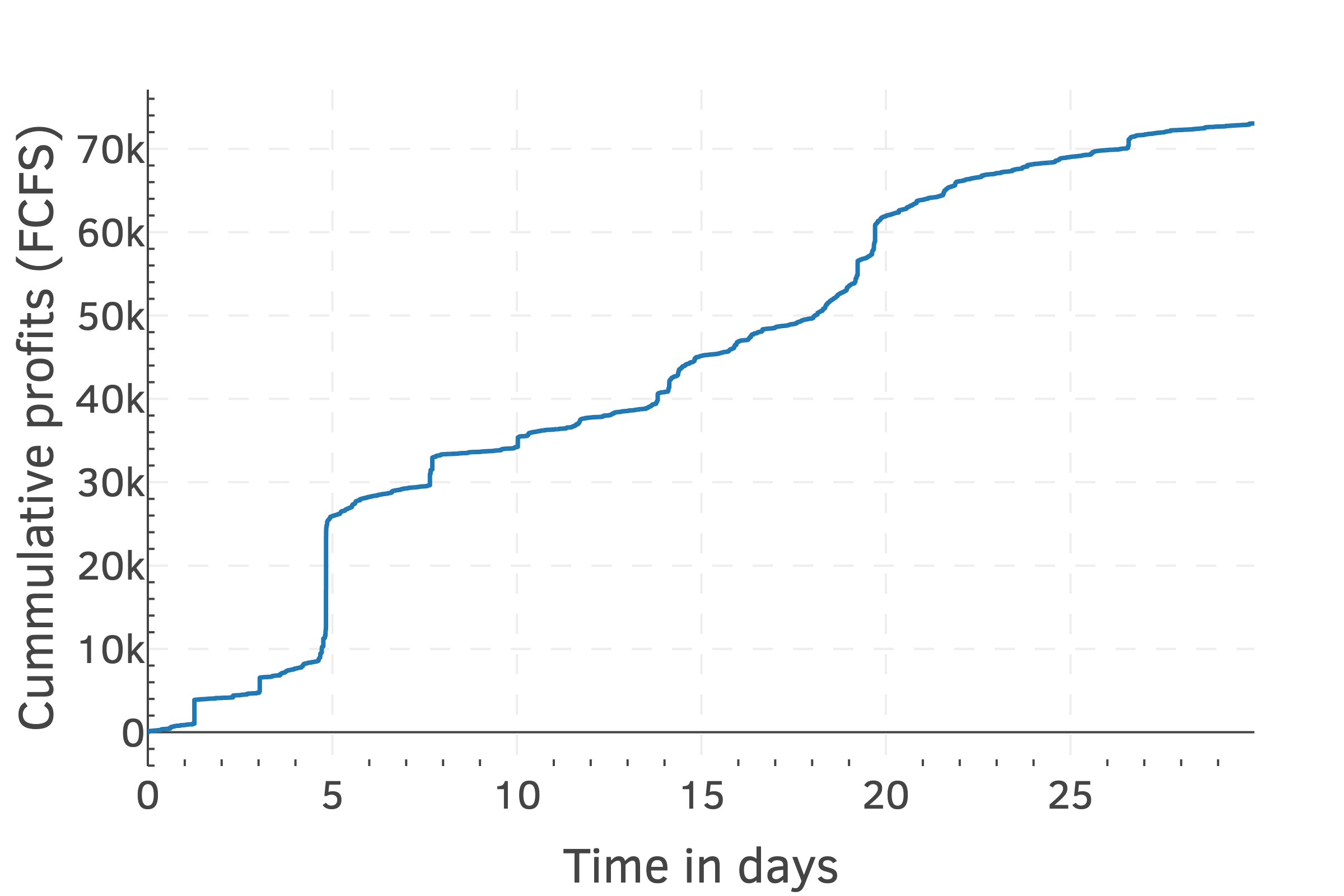}
  \caption{Cumulative profits over time.}
\end{subfigure}
\caption{Distribution of arbitrage profits for a ETH-USDT 0.05\% pool with \$100m liquidity in the first-come-first-serve setting.}
\label{fig:profit_distribution_fcfs}
\end{figure}

\begin{figure}[ht]
\centering
\begin{subfigure}{.5\textwidth}
  \centering
  \includegraphics[width=\linewidth]{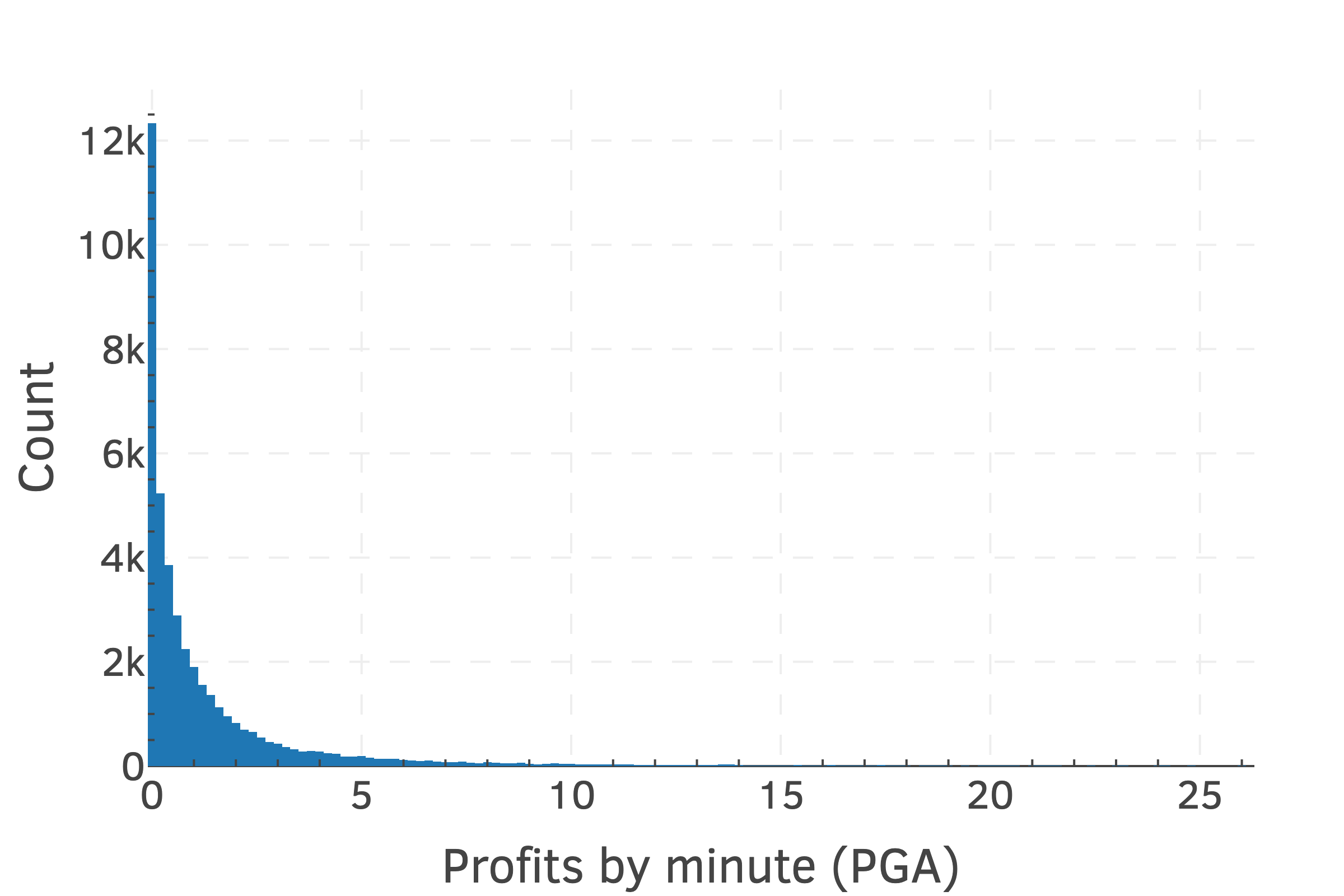}
  \caption{Histogram of minutely profits in \$.}
\end{subfigure}%
\begin{subfigure}{.5\textwidth}
  \centering
  \includegraphics[width=\linewidth]{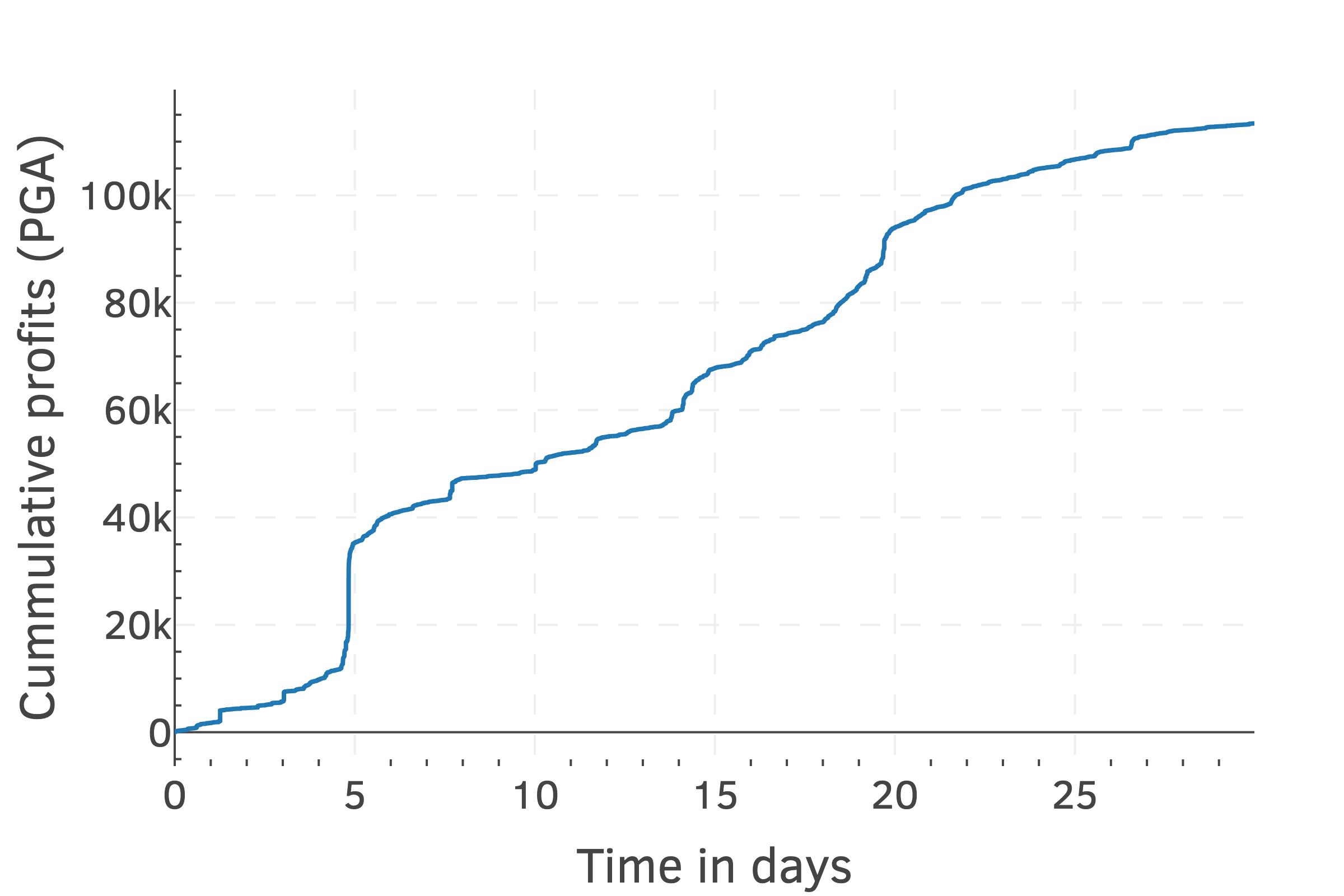}
  \caption{Cumulative profits over time.}
\end{subfigure}
\caption{Distribution of arbitrage profits for a ETH-USDT 0.05\% pool with \$100m liquidity in the priority gas auction setting.}
\label{fig:profit_distribution_priority}
\end{figure}

\end{document}